\def\be{\begin{equation}}
\def\ee{\end{equation}}
\def\ba{\begin{array}{c}}
\def\ea{\end{array}}
\newcommand{\bea}{\begin{eqnarray}}
\newcommand{\eea}{\end{eqnarray}}
\newcommand{\kt}{\rangle}
\newcommand{\br}{\langle}
\newtheorem{thm}{Theorem}
\newtheorem{lemma}[thm]{Lemma}
\newenvironment{proof}{\noindent
 {\bf Proof.}}{\hfill$\square$\vspace{3mm}\endtrivlist}
\begin{document}

\begin{center}

.

{\Large \bf

Feasibility and method of multi-step Hermitization
of crypto-Hermitian quantum Hamiltonians

}

\vspace{10mm}

\textbf{Miloslav Znojil}

\vspace{0.2cm}

\vspace{0.2cm}

The Czech Academy of Sciences, Nuclear Physics Institute,

 Hlavn\'{\i} 130,
250 68 \v{R}e\v{z}, Czech Republic

\vspace{0.2cm}

 and

\vspace{0.2cm}

Department of Physics, Faculty of Science, University of Hradec
Kr\'{a}lov\'{e},

Rokitansk\'{e}ho 62, 50003 Hradec Kr\'{a}lov\'{e},
 Czech Republic

\vspace{0.2cm}

 and

\vspace{0.2cm}

Institute of System Science, Durban
University of Technology, Durban, South
Africa

\vspace{0.2cm}

{e-mail: znojil@ujf.cas.cz}


\end{center}

\newpage

\section*{Abstract}

Even the most economical conventional
formulation of quantum mechanics
called Schr\"{o}dinger picture (SP)
need not be always computationally or conceptually
optimal. A Dyson-inspired
remedy can sometimes be sought in
a transfer of the calculations
from the conventional and correct SP Hilbert space ${\cal H}$
into a ``false but friendlier'' Hilbert
space ${\cal R}$.
The difference (i.e., non-equivalence) between ${\cal R}$
and ${\cal H}$ is given by a
judicious simplification of the inner product.
Although
the Hamiltonian $H$ itself appears non-Hermitian in ${\cal R}$,
its necessary
Hermitization
(mediated by a return to ${\cal H}$)
is usually feasible.
If not, another eligible remedy is described in the present paper.
Our basic idea lies in the introduction of
a ``chained'' auxiliary manipulation multiplet of inner-product spaces
${\cal R}_1$, ${\cal R}_2$, \ldots, ${\cal R}_N$.
At arbitrary $N$, the structure and
properties of the resulting generalized SP (GSP)
formulation of quantum mechanics of unitary systems
(characterized by ``hiddenly Hermitian'' observables)
are described and discussed in detail.
Nontrivial nature of the related
multi-step Dyson-map
composition laws and
correspondences to the
conventional Hermitian models are also revealed and
clarified.


\subsection*{Keywords}

.

quantum dynamics
in Schr\"{o}dinger picture;

sequences of auxiliary inner-product spaces;

apparently non-Hermitian generators of
unitary evolution;

factorization of the physical Hilbert-space metrics;

\newpage

\section{Introduction}

In conventional textbooks
the basics of quantum theory
of unitary systems
are usually explained
in Schr\"{o}dinger
representation {\it alias\,}
Schr\"{o}dinger picture  (SP, \cite{Messiah}).
The pure state
is represented there by a
time-dependent ket-vector element
of a Hilbert space
${\cal L}$. The evolution
is assumed generated by a self-adjoint Hamiltonian $\mathfrak{h}$.
The one-to-one correspondence between the
self-adjointness of $\mathfrak{h}$ and the unitarity of the
evolution was given the rigorous
mathematical form by Stone \cite{Stone}.
In this sense, the
Bender's and Boettcher's
claim \cite{BB}
that the
unitary evolution
could be
also generated by a non-Hermitian
Hamiltonian $H \neq H^\dagger$
sounded, initially, contradictory \cite{Streater}.

Fortunately,
the apparent paradox
found quickly its
origin in an
elementary terminological misunderstanding
(see, e.g., reviews \cite{Carl,SIGMA,ali}
or section \ref{qe} below).
It has been revealed that
the
Bender's and Boettcher's
Hamiltonians $H$ are
only non-Hermitian in a ``theoretically redundant'',
i.e.,
mathematically preferable but
manifestly unphysical auxiliary Hilbert space
${\cal H}_{math}$.
In this sense, these operators have to be
Hermitized in a way explained,
by Scholtz et al,
in review \cite{Geyer}. In their words,
given a suitable non-Hermitian Hamiltonian $H$, one can still work in
an innovated version of quantum theory which
``allows for the normal quantum-mechanical interpretation``.

In such a setting the unitary evolution is found
generated by a non-Hermitian Hamiltonian $H$ because
such an operator (called ``quasi-Hermitian'' \cite{Geyer}
{\it alias\,} ``crypto-Hermitian'' \cite{Smilga})
appears {\em Hermitizable}.
In applications the Hermitization
has to be visualized as a
constructive
transfer of representation of the quantum bound
states of interest from the auxiliary Hilbert space ${\cal H}_{math}$
to the traditional physical Hilbert space ${\cal L}$ of textbooks
or, better \cite{ali},
to one of its other suitable representations (say, ${\cal H}_{phys}$).

The phenomenological relevance as well as the
feasibility of the
procedure of Hermitization
was illustrated, using an exactly solvable model,
by Buslaev and Grecchi \cite{BG}. Rigorously, these authors
managed to show that the
unstable anharmonic-oscillator toy-model Hamiltonian
 \be
  H^{(BG)}=H_{}({\rm i}g,j)=-\frac{d^2}{dx^2}+
 \frac{j^2-1}{4\,r^2_{}(x)}+
 \frac{r^2_{}(x)}{4}-
 \frac{g^2\,r^4_{}(x)}{4}\,,
 \ \ \
 r_{}(x)=x-{\rm i}{\eta}\,,\ \ \ \eta>0\,
 \label{ope}
 \ee
defined as acting and manifestly non-Hermitian
in an unphysical Hilbert space ${\cal H}_{math}=L^2(\mathbb{R})$
can be assigned, at any real $j$ and positive
$g>0$, the entirely conventional
self-adjoint
{\em isospectral\,} partner
 \be
 \mathfrak{h}^{(BG)}=Q(g,j)=-\frac{d^2}{dx^2}
 -(gx-1/2)\,j +
 (gx-1)^2\,x^2\,,\ \ \ \ \  x \in \mathbb{R}
 \,
 \label{upe}
 \ee
defined as acting
in another, physical  Hilbert space ${\cal L}=L^2(\mathbb{R})$.

Remarkably enough, even such a
most elementary sample of Hermitization
$H^{(BG)} \to \mathfrak{h}^{(BG)}$
appeared to
require the introduction of several
intermediate auxiliary isospectral operators
(their list
contains cca six items,
the explicit form of which
may be found in {\it loc. cit.}).
On this ground
one may suspect that
the {\it feasible\,}
process of
Hermitization
should also
have a
{\em multi-step}, $N-$step
structure in general.
This is the idea which also served as
the main motivation of our present
study.

In the first stage of development
(cf. section \ref{qe})
we will
keep $N=1$ and
simplify and reformulate
the very concept of the Hermitization.
For this purpose
we will just review the two
complementary existing forms of the theory.
In its older version
(cf. review \cite{Geyer} or subsection \ref{DT} below) one
manages to circumvent
the (in general, difficult) description of the
direct change $H\to \mathfrak{h}$ of the Hamiltonian.
This is mediated by
introduction of another, third  Hilbert
space ${\cal H}_{phys}$ \cite{SIGMA}.
In subsection \ref{ufr} we will briefly review
the newer and better known
specific approach to the technicalities
known as ${\cal PT}-$symmetric
quantum mechanics \cite{Carl}.
The practical feasibility of the
Hermitization process is
enhanced there simply by a judicious
factorization of the
inner-product metric
(cf. formula (\ref{fak}) below).
Naturally,
the ultimate amended physical space ${\cal H}_{phys}$ must be
constructed as equivalent to its ``user-unfriendly''
textbook predecessor ${\cal L}$.
By construction \cite{Carl,ali}, Hilbert space ${\cal H}_{phys}$
only differs from manifestly unphysical ${\cal H}_{math}$
by an amendment and factorization of the inner-product metric
(i.e., in our present notation, of operator $\Theta$)
in a way illustrated by the
diagram of Eq.~(\ref{humdi}) below.

In section \ref{heha} we will emphasize that both of the latter amendments
are essential. They will be shown to lead to
a decisive ultimate simplification of
correspondence between spaces ${\cal L}$ and ${\cal H}_{phys}$.
In the same section we will also
recall our recent letter~\cite{PLA}
where we described the first nontrivial implementation
of the idea.
Basically, we just considered $N=2$ and
proposed there a modified and slightly amended
version of the Bender-inspired ${\cal PT}-$symmetric
quantum mechanics. Although we encountered there
certain conceptual obstacles (so that we simply
did not think about $N>2$),
still, letter \cite{PLA} can be perceived
as a basic inspiration of our present
message.

Briefly, our present proposal can be characterized as
an implementation of the
same non-Hermitian model-building
strategy using arbitrary $N \geq 2$.
The explicit form of the innovation will be
outlined
in section~\ref{multikulti}, while its
physical
background and consequences
will be discussed in section \ref{opere}.
In a complementary section \ref{dymas}
we will finally add a few comments on the
correspondence between the present
$N-$step
Hermitizations of the Hilbert-space-changing form
${\cal H}_{math}\to {\cal H}_{phys}$
and the
older, Dyson-inspired \cite{Geyer,Dyson}
and Hamiltonian-changing
model-building flowcharts.

The  summary and outline of some
potential practical benefits
of our present generalized SP (GSP)
reformulation of quantum mechanics
will finally be given
in section~\ref{sumys}.

\section{Quantum mechanics using two inner products ($N=1$)\label{qe}}

Even in the purely methodical context
the Buslaev's and Grecchi's construction connecting the
asymptotically repulsive ``wrong sign''
complex potential in (\ref{ope})
with the
asymptotically well-behaved
double-well potential in (\ref{upe})
is exceptional. Indeed,
both of the related partner Hamiltonians
possessing the discrete and
real bound-state spectra
are still just ordinary
differential operators of the second order.
In contrast,
the analogous would-be isospectral partners $\mathfrak{h}^{(BB)}$
of all of the
Bender's and Boettcher's \cite{BB} models of the form
$H^{(BB)}=-d^2/dx^2+V^{(BB)}(x)$
have only been constructed approximatively \cite{Carl,ali,117}. Moreover,
even this indicated that
their form (which,
as required, was self-adjoint in ${\cal L}$)
was non-local and extremely complicated,
represented by
differential operators of infinite order.
After all, such an explicit construction
experience only reconfirmed the expectable
contrast
between a guaranteed simplicity (i.e., computational tractability) of
``input'' $H$ and
a prohibitive
complexity of a generic ``output'' $\mathfrak{h}$.

The same contrast can be also detected
in an opposite extreme of
practical realistic calculations.
In the above-cited review \cite{Geyer}, for example,
the authors' study of the Hermitization flowchart
 \be
 H\,\to\,\mathfrak{h}=
 \Omega\,H\,\Omega^{-1}\,
 \label{opprvak}
 \ee
was in fact inspired by
the Dyson's realistic many-body calculations \cite{Dyson}.
In them, the arrow in Eq.~(\ref{prvak}) had only been
drawn in opposite direction,
 \be
 \mathfrak{h}^{(Dyson)}\,\to\,
 H^{(Dyson)}= \Omega^{-1}\,\mathfrak{h}^{(Dyson)}\,\Omega\,.
 \label{prvak}
 \ee
Still, the motivation of the
use of the isospectral
mapping $\Omega$
remained precisely the same:
There was no doubt about the choice between
the realistic but user-unfriendly many-fermion
Hamiltonian $\mathfrak{h}^{(Dyson)}$
and its judiciously pre-conditioned
isospectral bosonic partner $H^{(Dyson)}$ of Eq.~(\ref{prvak}).
Indeed, as long as the Dyson's choice of
the preconditioning operator
(or, in fact, matrix) $\Omega^{(Dyson)}$
was supported by intuition, the mapping (\ref{prvak})
could really lead to a simplification
of Schr\"{o}dinger equation,
especially when Dyson decided to
work with the most general non-unitary mappings such that
 \be
 \Omega^\dagger \Omega = \Theta \neq I\,.
 \label{amenes}
 \ee

\subsection{Dyson's transformation and crypto-Hermitian Hamiltonians\label{DT}}

During the application of the conventional textbook SP approach to
certain many-particle quantum systems the convergence of variational
methods may become prohibitively slow. In these cases the
Hamiltonian is, typically, a complicated partial differential
operator $\mathfrak{h}$ acting in a complicated many-body Hilbert
space ${\cal L}=L^2(\mathbb{R}^d)$ with a large
dimension-representing exponent $d$. The convergence becomes
particularly slow when the particles are fermions. In this case,
indeed, the system has to obey the Pauli exclusion principle
\cite{Jenssen} so that ``it has become customary to map Hermitian
fermion operators onto non-Hermitian boson operators'' \cite{Geyer}.
Purely empirically it has been revealed, originally by Dyson
\cite{Dyson}, that the calculations (typically, of the low-lying
spectra) may perceivably be accelerated via a judicious {\it a
priori\,} simulation of the correlations. In the language of
mathematics this means that it makes sense to replace the initial
``fermionic'' Hilbert space ${\cal L}$ by an auxiliary (and,
formally, non-equivalent) ``bosonic'' Hilbert space, say, ${\cal
H}_{math}$. In the latter (and, by assumption,
user-friendlier) space the system becomes described by a new
Hamiltonian (say, $H$). Naturally, a successful (often, just
intuition-supported) estimate and simulation of the structure of the
correlations of fermions has often been found to imply a decisive
acceleration of the convergence of the variational treatment of the
bosonic operator $H$. In the language of numerical mathematics one
may speak about an isospectral preconditioning (\ref{prvak})
of the Hamiltonian.

The difference between the ``good'' and ``bad'' choice of the
operator $\Omega$ mapping ${\cal H}_{math}$ onto ${\cal
L}$ is strongly model-dependent. One of the conditions of success is
that this mapping (often called Dyson mapping) has to be as general
and adaptable as possible, i.e., in particular, non-unitary
(cf. (\ref{amenes})).
In a way explained in \cite{SIGMA} the new, preconditioned,
user-friendlier Hamiltonian $H$ can be perceived as defined in
another Hilbert space ${\cal H}_{math}$ which is, due to property
(\ref{amenes}), not equivalent to its physical predecessor ${\cal
L}$: In the Dyson's realistic calculations \cite{Dyson}, for
example, the fermionic states in ${\cal L}$ were represented by the
bosonic states in ${\cal H}_{math}$.

In
general, the Dyson's rules (\ref{prvak}) and (\ref{amenes}) render $H$
non-Hermitian
in ${\cal H}_{math}$,
 \be
 H \neq H^\dagger=\Theta\,H\,\Theta^{-1}\,.
 \label{quasi}
 \ee
For any preselected candidate $H$ for Hamiltonian the
constraint~(\ref{quasi}) can be read as specifying an amended,
alternative, correct and physical inner product in Hilbert space
${\cal H}_{phys}$ \cite{ali}. For this reason, the well known Stone
theorem~\cite{Stone} is not violated because the unitary evolution
of the system is in fact generated, by non-Hermitian $H$, in the
auxiliary Hilbert space ${\cal H}_{math}$. This space is not
equivalent to its physical partner ${\cal H}_{phys}$. Thus, one
should rather call $H$ crypto-Hermitian~\cite{Smilga}.

In this setting it is probably useful to point out that the
preservation or failure of the ``duality'' or ``complementarity''
between  ${\cal H}_{math}$ and  ${\cal H}_{phys}$ may be fairly
sensitive to the detailed properties of the Hamiltonian in question.
For a word of warning one does not even need to go to the
non-Hermitian quantum field theory where the Stone theorem need not
apply. Indeed, it is fully sufficient to see that multiple subtle
formal problems like, e.g., the possibility of the non-existence of
the metric may emerge in a finite-dimensional model (see, e.g.,
section Nr. 3 in review \cite{Geyer}).

The main aim of transformation (\ref{prvak})
is that the description of dynamics
(originally provided by $\mathfrak{h}$)
is now shared by the two operators (viz., by $H$ and $\Omega$ or $\Theta$).
In practical calculations, this makes the Dyson-inspired SP more flexible,
in principle at least.
Still, all of the mathematical operations
have to be performed,
exclusively, in the auxiliary
Hilbert space
${\cal H}_{math}$.
Although the latter space seems to
play just a technical role,
its key merit is that
it can be re-read,
after the mere {\it ad hoc\,}
redefinition of the inner product
(see \cite{ali,Geyer} or formula (\ref{reccu}) below, with $K=2$)
as a new, unitarily equivalent representation
${\cal H}_{phys}$ of
the ``old and missing''  Hilbert space
${\cal L}$.
The conventional probabilistic interpretation
of the unitary quantum dynamics gets restored.

The non-Hermiticity of $H$
in the working space ${\cal H}_{math}$
leads to
the necessity of using some
less straightforward methods of solution
of the underlying Schr\"{o}dinger equation.
In many models,
the merits of the adaptability of the
non-unitary
mapping
(often called, in the light of papers \cite{Dyson},
Dyson map) proved to
prevail
(cf., e.g., the success and productivity of this approach in
nuclear physics \cite{Jenssen}).
Still, one of the unavoidable consequences
of the non-unitarity of $\Omega$
is that
the conventional interpretation of the
eigenstates of $H$
has to be
modified.
In the manner described in reviews \cite{SIGMA,Geyer}
the situation may be clarified and
visualized using the following diagram
 \be
 \label{humdi}
  \ba
   \ \  \begin{array}{|c|}
 \hline
  \ {\rm inaccessible \ physical\ Hilbert\ space} \
 {\cal L}\ {\rm of\ textbooks} \\
 \hline
 \ea
\\ \ \ \ \ \
\ \ {\rm (Dyson's\ map}\
 \Omega) \   \nearrow \ \  \
 \ \ \ \ \ \
 \ \ \ \ \ \
\ \ \ \ \  \nwarrow  \searrow  \  { \rm (equivalence)}\ \ \  \\
\ \
 \begin{array}{|c|}
 \hline
  \ {\rm friendlier\ representation  \ space} \
 {\cal H}_{math}\ \\
  \hline
 \ea\
 \stackrel{{\rm (metric}\ \Theta)}{ \longrightarrow }
 \
 \begin{array}{|c|}
 \hline
  \ {\rm alternative \ physical\  space} \
 {\cal H}_{phys}\ \\
 \hline
 \ea
\\
\\
\ea
 \ee
This diagram indicates that
in certain specific non-Hermitian-Hamiltonian models
one has to separate
the probabilistic
physical predictions formulated in ${\cal H}_{phys}$
from the explicit calculations
performed, much more efficiently, in
${\cal H}_{math}$.
The
space ${\cal L}$ and the SP
Hamiltonian $\mathfrak{h}$
have to be abandoned
and replaced by
the doublet ${\cal H}_{math/phys}$
and by the preconditioned
Hamiltonian $H$, respectively.
A very general crypto-Hermitian SP (CHSP) of review \cite{Geyer}
is born.

The essence of the CHSP formalism
has most concisely been explained by Mostafazadeh \cite{ali}.
He put emphasis upon the interplay
of the inner-product structures in ${\cal H}_{math/phys}$.
He emphasized that the
standard physical
role is played by
${\cal H}_{phys}$.
Although the merits
of
the introduction of another,
apparently redundant Hilbert space
${\cal H}_{math}$
seem less obvious, its introduction
separated the interpretations
(in ${\cal H}_{phys}$)
from calculations
(in ${\cal H}_{math}$) and facilitated the
qualitative as well as quantitative predictions
(cf., once more, the above-mentioned
Dyson's study of ferromagnetism \cite{Dyson}).
At the same time,
during
the implementation of the general CHSP theory
people encountered also serious technical obstacles
(some of them were
listed on p. $1216\ $ of review
\cite{ali}).
For this reason,
only
the most recent simplifications made the theory
really widely known and
successful \cite{Carlbook}.

\subsection{${\cal PT}-$symmetric quantum mechanics\label{ufr}}

The CHSP approach can be perceived as
equivalent to
the standard quantum mechanics, being distinguished just by the
conversion of the conventional SP Hamiltonian $\mathfrak{h}\,$ into its
less standard
isospectral representation $H$
defined as acting
in ${\cal H}_{math}$ and
amenable to necessary Hermitization.
Nevertheless, the
CHSP
formalism
has only recently been converted,
via decisive simplifications,
into
one of the most influential and popular
versions of SP,
widely known as
${\cal PT}-$symmetric quantum mechanics (PTQM, \cite{Carl}).

In a slightly less general PTQM
setting, the key role is
still played by the
details of the metric-mediated transition from the auxiliary
Hilbert space
${\cal H}_{math}$ to its correct physical partner
${\cal H}_{phys}$. In a way indicated by the horizontal arrow
in diagram~(\ref{humdi}) the fundamental
message that
the Hamiltonian $H$ must be Hermitian in ${\cal H}_{phys}$
remains unchanged.
In the literature, unfortunately, this message is often obscured by
the diversity of notation conventions
(see their sample in Table Nr. 1 of \cite{NIP}).
We will use here, therefore,
the properly modified Dirac's
bra-ket formalism \cite{Messiah}
and the maximally
compact abbreviations
for ${\cal H}_{phys}= {\cal R}_{0}$
and
${\cal H}_{math}= {\cal R}_{N_{}}$.
This will enable us to distinguish easily
between the quantum mechanics of textbooks
(in which we may put $N_{}=0$ to imply that
${\cal H}_{math}\ \equiv \ {\cal H}_{phys}\ \equiv \ {\cal L}_{}$)
and the menu of the two-space CHSP scenarios
in which one may choose $N_{}\geq 1$
emphasizing
the non-equivalence between
${\cal H}_{math}= {\cal R}_{N_{}}$
and ${\cal H}_{phys}= {\cal R}_{0}$.

One of the sources of the
user-friendliness of the general CHSP formalism as
described by Scholtz et al \cite{Geyer}
is that once we put, for simplicity, $N_{}= 1$,
we just have to consider
the doublet
 \be
 \{{\cal H}_{math},{\cal H}_{phys}\}\ = \
 \{{\cal R}_{1},{\cal R}_{0}\}
 \,
 \label{doupl}
 \ee
of relevant
Hilbert spaces.
This clearly differs from the
conventional quantum theory in which
one sets $N_{}= 0$.
After innovation (\ref{doupl})
the Hilbert space ${\cal H}_{math}$ becomes unphysical
and the Hamiltonian $H$ becomes manifestly non-Hermitian
in this space.

During the necessary
Hermitization of $H$
the main technical task is
the
construction
of the correct physical metric operator
$\Theta$ which would be compatible with
the Hamiltonian
crypto-Hermiticity
{\it alias\,} pseudo-Hermiticity \cite{ali}
{\it alias\,} quasi-Hermiticity \cite{Geyer}
condition~(\ref{quasi}).
It is worth a comment that the
key role of
operator $\Theta$
(with the upper-case Greek-letter symbol proposed in \cite{SIGMA})
is in a sharp contrast
with a lack of
its sufficiently widely accepted denotation. In
Refs.~\cite{Dyson}, \cite{Geyer} or \cite{ali}, for example,
these
physical Hilbert-space metrics may be found denoted
by the very different symbols like $F$, $T$ or $\eta_+$,
respectively.

In the most straightforward Bender-inspired
PTQM approach \cite{Carl} the
metric operator is constructed
in the form of product
 \be
 \Theta_{(Bender)}={\cal PC}\,
 \label{fak}
 \ee
where ${\cal P}$ is parity
and where
the symbol ${\cal C}$ denotes the so called charge \cite{BBJ}.
The introduction of such an ansatz
proved well motivated. In many models
it appeared to offer
a perceivable simplification of the
calculations as well as
of the subsequent necessary extraction of the
model-dependent
testable and also, in principle,
falsifiable predictions \cite{ali,Carlbook}.

\section{Quantum mechanics using three inner products ($N=2$)\label{heha}}


Our present paper will be devoted to
a certain conceptual and methodical extension and completion
of the CHSP and PTQM approaches.
We felt motivated by the observation that
in a broader area of physics
both of these recipes
appeared comparatively
difficult to implement.
Various authors offered
remedies involving the
restriction of attention to the bounded-operator
Hamiltonians \cite{Geyer} or the use of
specific models in which
the SP Hamiltonian $H$ is
${\cal PT}-$symmetric (see \cite{Carl}).
It is worth adding that
the latter approach was
based on a
parity-related (i.e., ${\cal P}-$related)
and time-reversal-related (${\cal T}-$related) assumption
$H{\cal PT}={\cal PT}H$
known as
${\cal PT}-$symmetry
of
the Hamiltonian.
This was an intuitively appealing feature
which made the PTQM approach particularly
popular, often even beyond its original scope and
restriction
to the closed and unitary quantum systems~\cite{Christodoulides}.

\subsection{Intermediate space}

Strictly speaking, several differences between
the closely related CHSP and PTQM unitary models
are non-trivial \cite{ali}. This fact happened to be obscured
by the slightly misleading current terminology.
{\it Pars pro toto}, let us mention that the
Hamiltonian operators $H$ which are
all required to possess the real spectrum and which are all
required self-adjoint in the correct physical Hilbert space
${\cal H}_{phys}$
are still called non-Hermitian in the literature.
Presumably, the reasons are psychological:
All of the
necessary two-space calculations are, naturally,
performed in just one of the frames, viz., in
the auxiliary and manifestly unphysical Hilbert space
${\cal H}_{math}$
in which the Hamiltonians
really {\em are\,} non-Hermitian.

In our recent letter \cite{PLA} we
paid more attention to the terminology. We
decided to
choose $N_{}= 2$
and to
replace the CHSP
Hilbert-space doublet (\ref{doupl})
by triplet
 \be
 \{{\cal H}_{math},{\cal H}_{intermediate},{\cal H}_{phys}\}\ = \
 \{{\cal R}_{2},{\cal R}_{1},{\cal R}_{0}\}
 \,.
 \label{tripl}
 \ee
The introduction of the intermediate inner-product space
helped us to
throw new light
on the terminology as well as
on
the fundamental operator-product (\ref{fak}).
In the resulting intermediate-space SP (ISP)
version of the
formalism
the correct probabilistic physical interpretation
of a given set of some
preselected candidates
$\Lambda_m$ for the operators of observables
with $m=0,1,\ldots,{M}\,$
has been clarified.

On these grounds,
the Hermitization
did not proceed directly
from ${\cal H}_{math}$ to ${\cal H}_{phys}$
(as, for example, in Ref.~\cite{Geyer}) but rather
indirectly,
via the third inner-product
space (i.e., Hilbert or Krein space) ${\cal H}_{intermediate}$.
A key to the comparison
of ISP and PTQM
has been found in the charge.
Indeed, in
the PTQM framework
the theory only becomes consistent
after one guarantees the ${\cal PCT}-$symmetry
$\,H{\cal PCT}={\cal PCT}H\,$
of the Hamiltonian \cite{BBJ}.
In parallel, in the ISP context
the charge appeared to play
a double role, i.e., not only the original role of
component of the correct physical inner-product metric
(\ref{fak}) which determines
the geometry in ${\cal H}_{phys}={\cal R}_0$, but also
a new role of an auxiliary metric operator
in
${\cal H}_{intermediate}={\cal R}_1$.

For the sake of
an enhancement of
clarity we will now change the notation and abbreviate
${\cal C}=Z_1$
(emphasizing the
geometry-determining role of the charge in ${\cal R}_1$) and
${\cal P}=Z_2$
(underlining the analogous
auxiliary-metric-operator role of
the -- possibly, generalized -- parity
in ${\cal H}_{math}={\cal R}_2$).
Another
abbreviation
will be used to emphasize the privileged status
of the specific physical metric
$\Theta_{(Bender)}={\cal PC}=Z_2Z_1=Y_2\,$ of Eq.~(\ref{fak}).
Marginally, we might add that
in the literature
the reference to the ``parity'' survived
even when the operator
${\cal P}=Z_2\,$ itself
(re-denoted by Greek lower-case $\eta$ in \cite{ali}
and required to be self-adjoint in ${\cal R}_2$,
$Z_2=Z_2^\dagger$)
ceased to be immediately connected
with spatial reflection (check also \cite{shendr}).

\subsection{Hermitian conjugation as an ambiguous,
inner-product-dependent concept}

Needless to emphasize,
the unmodified
Dirac's notation conventions can only be used
in the
mathematical-space extreme of a
maximal Hilbert-space subscript
$j=N_{}$.
Thus, at $N_{}=2$,
the conventional
Hermitian conjugation of an operator
can be written in the conventional Dirac's form
$ \Lambda \to \Lambda^{\dagger}$ in
${\cal H}_{math}={\cal R}_{2}$.
Otherwise,
in every other space ${\cal R}_{j}$
with $j<N_{}$
we will mark
the
generalized
Hermitian conjugation of operators as follows,
 $$
  \Lambda \to  \Lambda^{\ddagger(j)}\,
 $$
i.e.,
by
a dedicated space-dependent superscript.

Concerning the Hermitian conjugations of the ket vectors,
the basic inspiration of its present, modified-Dirac
denotation lies in a
rather elementary fact
that any Hilbert space
(with the
inner product $\br \psi_a|\psi_b\kt $ linear in the second, ket-vector argument)
can be
perceived as an ordered pair $[{\cal V},{\cal V}']$ of a linear
topological vector space ${\cal V}$ (of conventional ket-vector elements
$|\psi\kt$) and of the dual
vector space of linear functionals
marked by the prime,  ${\cal
V}'$ (see, e.g., p. 246 in textbook
\cite{Messiah} where, incidentally, also one of the
best detailed explanations of the standard
bra-ket notation conventions
can be found).

As long as the three
Hilbert spaces (\ref{tripl}) have to
share the same linear
ket-vector-space
component ${\cal V}$ (cf. Ref.~\cite{PLA}),
it is necessary to individualize, by the notation,
the respective duals ${\cal
V}'$, i.e., the respective invertible antilinear correspondences
(we will write ${\cal T}: {\cal V} \to {\cal V}'$).
It would be insufficient to use the same Dirac-recommended
bra-vector symbol $\br \chi|$
for the denotation of the elements of all
of the different dual spaces
of our interest.

An appropriately adapted notation will be used here, therefore.
In essence, we will first define the inner
product in ${\cal H}_{phys}={\cal R}_0$
via an explicit specification of the
correct physical mappings ${\cal
T}_{phys}$. Only {\em after\,}
such a preparatory step
one can formulate the phenomenological predictions
deduced from the evaluation of the
corresponding matrix elements
of the relevant observables.
For this purpose
we
will use the matrix-multiplication-resembling left-action
convention to define
 \be
 {\cal T}_{phys}: |\psi\kt \ \to \ \br \psi|\,{\Theta}\,,
 \ \ \ \
 {\Theta}={\Theta}^\dagger\,
 \label{wide}
 \ee
Here the Hilbert-space-metric operator $\Theta$
is precisely the one entering the
crypto-Hermiticity constraint~(\ref{quasi}).
In this context it is important to re-emphasize
that the choice of the metric $\Theta$
only becomes admissible when one guarantees that
this operator is
self-adjoint in ${\cal H}_{math}$ \cite{Geyer}.

In connection with Eq.~(\ref{wide})
we should note that multiple
nontrivial technical problems may
emerge when the dimension of our
topological vector space
${\cal V}$ of kets is infinite.
This aspect of the theory
must carefully be discussed
in the framework of functional analysis \cite{book}.
As a complementary further reading
in this direction let us mention
the old Dieudonn\'{e}'s paper \cite{Dieudonne},
or the recent Refs.~\cite{ATbook} and
\cite{fabio}.
This being said, it is useful to add, explicitly, that
in any case, the Hilbert-space metric
operator $\Theta$ must necessarily be chosen
invertible and positive definite and bounded,
with bounded inverse \cite{Geyer,book}.

One of the other and most important related technical obstacles
is that in the general CHSP framework
our choice of the ``physical'' inner product
(i.e., of the invertible antilinear Hermitian-conjugation
correspondence (\ref{wide}))
is, given the Hamiltonian, {\em non unique}.
An exhaustive discussion of this topic can be found in
the literature~\cite{Messiah,ali,Lotor}.
The ambiguity of the inner product is only irrelevant
in ${\cal H}_{math}={\cal R}_{N_{}}$
where we may use the conventional
Dirac notation
 \be
 {\cal T}_{math}: |\psi\kt \ \to \ \br \psi|
 \label{e2}
 \ee
and where we may define the Hermitian conjugation operation
as a mapping in which $|\psi\kt$
is a column vector while $\br \psi|$ is perceived
as its row-vector transposition
containing complex-conjugate elements.

Let us now fix
$N_{}=2$ and let
us consider
the triplet $\{{\cal R}_{j}\}$ of
the inner-product spaces (\ref{tripl}) as in \cite{PLA}. Then, the
mathematical extreme of Eq.~(\ref{e2}) might optionally be
marked by a subscripted
bracketed $j=N_{}$,
 \be
 {\cal T}_{[N_{}]}: |\psi\kt \ \to \  \br \psi|\
  \equiv \ \br_{[N_{}]} \psi |\,.
 \label{dva2}
 \ee
At the other two $j<N_{}$
the subscripted index becomes obligatory.
Thus, at the correct and physical $j=0$ extreme of Eq.~(\ref{wide})
with $\Theta=Y_2$
we will write
 \be
 {\cal T}_{[0]}: |\psi\kt \ \to \  \br_{[0]} \psi|\
 \equiv \ \br \psi |\,Y_2\,.
 \label{dva0}
 \ee
As long as the auxiliary metric $Z_2=Z_2^\dagger\,$
in ${\cal H}_{math}={\cal R}_2\,$
is tractable as the (possibly, generalized)
parity \cite{shendr}, we will finally define the
intermediate-space Hermitian conjugation
as follows,
 \be
 {\cal T}_{[1]}: |\psi\kt \ \to \  \br_{[1]} \psi|\
 \equiv \ \br \psi |\,Z_2\,.
 \label{dva1}
 \ee
The specific
PTQM (or rather, with $N_{}=2$, ISP)
realization of the general
CHSP framework
can be now characterized by
formula (\ref{fak}). It
suppresses the
ambiguity of the inner-product metric
$\Theta_{(Bender)}=Y_2=Z_2\,Z_1$.
It also implies that
besides the Hamiltonian $H$
(alternatively denoted here as $Z_0$), also
the
charge $Z_1$ is kept observable.

In the original PTQM proposal \cite{BBJ} the formula
for $\Theta_{(Bender)}$
was
required to contain a {\em self-adjoint\,}
operator of charge.
In the alternative
ISP framework of paper \cite{PLA}
the same factorization ansatz
admited a
more general charge.
In contrast to its status in PTQM, it was admitted
{\em non-Hermitian\,} in ${\cal H}_{math}$. In our
present terminology
the charge $Z_1$
acquired the physical meaning of
another
non-Hermitian and
crypto-Hermitian but
Hermitizable observable-representing operator.

\section{Quantum mechanics using $K-$plets of inner products\label{multikulti}}


The choice of $N_{}=2$
which characterizes the ISP Hermitization of Ref.~\cite{PLA}
can be generalized.
The number $N_{}$ of the auxiliary,
manifestly unphysical
inner-product spaces ${\cal R}_j$ with $j> 0\,$
can be,
in the resulting
GSP extension of the ISP formulation of quantum mechanics,
arbitrarily large.
The ISP triplet
(\ref{tripl}) of the inner-product spaces
sharing the same ket-vector elements of ${\cal V}$
will be replaced  by the $K-$plet
 \be
 \{{\cal H}_{math},{\cal R}_{K-2},{\cal R}_{K-3},\ldots,{\cal R}_{1},
 {\cal H}_{phys}\}\,
 \label{multpl}
 \ee
with
$K=N_{}+1$ and with alternative
symbols for
the extremely mathematical inner-product space ${\cal H}_{math}= {\cal R}_{K-1}$
and for the extremely physical Hilbert space ${\cal H}_{phys}= {\cal R}_0$.

\subsection{The inner-product-dependent Hermitian conjugations}

For any preselected operator $\Lambda$ defined as acting upon kets $|\psi\kt
\in {\cal V}$, its
Hermitian conjugate partners will be
inner-product-dependent. At any $K$
these partners
will form the $K-$plet
 \be
 \Lambda^{\ddagger(0)}\,,\,
 \Lambda^{\ddagger(1)}\,,\,
 \ldots\,,\,
 \Lambda^{\ddagger({K}-2)}\,,\,
 \Lambda^{\ddagger({K-1})}\,
 \label{sedm}
 \ee
with
elements marked
by the inner-product-dependent superscripts.
In every space ${\cal R}_{j}$
we will consider
an operator $Z_j\ $ which will
be required self-adjoint in the respective space,
 \be
 Z_j=Z_j^{\ddagger(j)}\,,\ \ \ \
 j=0,1,\ldots,{K-1}\,.
 \label{tri}
 \ee
This assumption will enable us to treat, in
the GSP framework, every $Z_j$ as an
analogue of the ISP metric $\Theta$ of
Eq.~(\ref{wide}).
This means that
every such an operator will have to satisfy the
consistency conditions as listed, e.g.,
in Eq. Nr. (2.1) of
the physics-oriented review \cite{Geyer}.
These mathematical
conditions involve the domain-completeness (cf. Nr. (2.1a)),
Hermiticity (2.1b) (equivalent to
our present
Eq.~(\ref{tri})),
positive definiteness (2.1c)
and the metric-boundedness constraint
Nr. (2.1d) plus, let us add, the metric-inverse boundedness constraint
missing in \cite{Geyer}.
Still, the authors of
review \cite{Geyer} formulated a consistent theory because they managed to
circumvent certain formal difficulties
as mentioned by Dieudonn\'{e} \cite{Dieudonne} by restricting their
attention, drastically, just to the not too realistic models
possessing bounded-operator Hamiltonians. In contrast, Dieudonn\'{e}
himself revealed that once the class of Hamiltonians
remains sufficiently general, the ISP metric $\Theta$ need not exist
at all.

In 2015 the situation has been reconsidered and
summarized in the mathematically
oriented monograph \cite{book} where the readers may find that for
unbounded operators, a sufficiently general and still mathematically
satisfactory formulation of sufficient conditions of the
applicability of the theory is still an open problem. At the same
time, there are no formal obstacles in all of the phenomenological
applications working with finite matrices. In our present paper we
decided to pay attention, predominantly, just to the algebraic
aspects and amendments of the theory. This means that our readers
will either keep in mind the sufficiently elementary (and,
typically, finite-dimensional) implementations of the theory or,
alternatively, they will recall the highly sophisticated specialized
literature ({\it pars pro toto}, let us recommend Ref.~\cite{ATbook}
for introductory reading).

Somewhere in between the two extremes, one may consider the
separable (i.e., still sufficiently general) Hilbert spaces ${\cal
R}_j=[{\cal V},{\cal V}']$ marked by a subscript
$j=0,1,\ldots,{K-1}$. We will assume that the set ${\cal V}$ of the
ket vectors remains the same (i.e., $j-$independent)) while the dual
sets ${\cal V}'$ of the linear functionals \cite{Messiah} become
$j-$dependent, ${\cal V}'={\cal V}'_{[j]}$. Every space ${\cal R}_j$
is self-dual. The correspondence between a ket $|\psi\kt \in {\cal
V}$ and its dual can be perceived as a result of action of an
invertible antilinear operator ${\cal T}_{[j]}$. The $K-$plet of
Hermitian conjugations yields the subscript-dependent bra-vectors,
 \be
 {\cal T}_{[j]}: |\psi\kt \ \to \  \br_{[j]} \psi|\,,
 \ \ \ \  j=0,1,\ldots, {K-1}\,.
 \label{dva}
 \ee
We will drop again the highest
subscript $j={K-1}$ as redundant,
$\br_{[{K-1}]} \psi| \,\equiv\, \br \psi|$. This
underlines
that the Hilbert
space ${\cal R}_{K-1}={\cal H}_{math}$ is a privileged
one, preferred in the calculations.

Next,
we may
generalize
the two ISP rules~(\ref{dva1}) and~(\ref{dva0})
and set
$\br_{[{K-2}]} \psi|=\br \psi|\,Z_{K-1}$
and
$\br_{[{K-3}]} \psi|\,Y_{K-1}$, respectively.
These
bra-vectors
are just the first two
transforms of
the same preselected
element. The list is easily completed at any $K> 3$.
Using an antilexicographically
ordered set of arbitrary left-acting operators we may
define
 \be
 \br_{[{K}-2]} \psi|=\br \psi|\,Z_{K-1}\,,\ \ \
 \br_{[{K}-3]} \psi|=\br \psi|\,Y_{K-1}\,,\ \ \
  \br_{[{K}-4]} \psi|=\br \psi|\,X_{K-1}\,,\ \ \ldots\,,
  \label{toe}
 \ee
with the last-item
$\br_{[0]} \psi|=\br \psi|\,\Theta$.

In GSP it makes sense
to replace
the $(K-1)-$subscripted operators entering Eq.~(\ref{toe})
by the sequence of the $Z_j\, $
operators
with
$ j={K-2}, {K-3},\ldots$.
Such a replacement can be realized via
the sequence of linear equations
 \be
 Y_j=Z_j\,Z_{j-1}\,,
 \ \ \ \
 X_{j}=Z_{j}\,Y_{j-1}\,,
 \ \ \ \
 W_{j}=Z_{j}\,X_{j-1}\,,
 \ \ \ \ \ldots\,,
 \ \ \ \ j = {K-1}, {K-2}, \ldots\,,
  \label{sillytoe}
 \ee
i.e., via recurrences
 \be
 Z_{j-1}=Z_j^{-1}\,
 Y_j\,,
 \ \ \ \
 Y_{j-1}=Z_j^{-1}\,
 X_j\,,
 \ \ \ \
 X_{j-1}=Z_j^{-1}\,
 W_j\,,
 \ \ \ \ \ldots\,,
 \ \ \ \ j = {K-1}, {K-2}, \ldots\,.
  \label{lytoe}
 \ee
The bra-vectors of Eq.~(\ref{dva})
can be then
given their ultimate, exclusively $Z_j-$dependent form,
 \be
 {\cal T}_{[j]}: |\psi\kt \ \to \ \br_{[j]} \psi| =
 \br \psi|\,Z_{K-1}\,Z_{{K-2}}\, \ldots\,Z_{j+1}\,,
 \ \ \ \ \ j={K-2}, {K-3},\ldots, 1,0\,.
  \label{altoe}
 \ee
All of these, by assumption, mutually non-parallel
bra-vectors have the same structure. The formula
reproduces the
${K}=3$ and ${K}=2$ special cases
appearing in the quasi-Hermitian quantum mechanics of
reviews~\cite{ali,Geyer,SIGMAdva}.
The ``non-metric''
operator $Z_{0}\ $ playing the role of
the physical SP Hamiltonian
does not enter our final definition (\ref{altoe})
of course.

We are now prepared to
define, in all of our Hilbert spaces ${\cal R}_j$, the
respective inner products $(
\psi_1,\psi_2)_{{\cal R}_{j}}\,\equiv\,
\br_{[j]} \psi_1|\psi_2\kt$
in recursive manner,
 \be
 \br_{[j]} \psi_1|\psi_2\kt
 =
 \br_{[j+1]}  \psi_1|Z_{j+1}|\psi_2\kt
  \,,\ \ \ \
 j={K}-2, {K}-3, \ldots,1,0\,.
 \label{reccu}
 \ee
Only the ``friendliest'', $j={K-1}$ item is allowed to be
written in the conventional Dirac's bra-ket form without subscripts again, $(
\psi_1,\psi_2)_{{\cal R}_{{K-1}}}=\br \psi_1|\psi_2\kt$.

\begin{table}[h]
\caption{List of auxiliary Hermiticity relations.}
 \label{o3x} \vspace{.4cm}
\centering
\begin{tabular}{||c|c|c|cccc||}
    \hline \hline
      &  & metric & \multicolumn{4}{|c|}
 {operator product}
\\
    $j$ & {\rm space }& $Z_j$ &$Z_jZ_{j-1}=Y_{j}$&$Z_jY_{j-1}=X_{j}$&
    $Z_jX_{j-1}=W_{j}$
    &\ldots
 \\
 \hline \hline
  & &&&&
  & \\
    $1$&
    ${\cal R}_1$
       &
  $Z_1=Z_1^{\ddagger(1)}$ &
  $Y_1=Y_1^{\ddagger(1)}$& & &
       \\
    $2$&
    ${\cal R}_2$
       &
  $Z_2=Z_2^{\ddagger(2)}$ &
  $Y_2=Y_2^{\ddagger(2)}$&
  $X_2=X_2^{\ddagger(2)}$ & &
       \\
    $3$&
    ${\cal R}_3$ &
  $Z_3=Z_3^{\ddagger(3)}$ &
  $Y_3=Y_3^{\ddagger(3)}$ &
  $X_3=X_3^{\ddagger(3)}$&
  $W_3=W_3^{\ddagger(3)}$&
             \\
      \vdots   &\vdots&\vdots&\vdots&\vdots&\vdots& $\ddots$\\
\hline \hline
\end{tabular}
\end{table}

\subsection{Systematic replacements of Hermiticities by pseudo-Hermiticities}

In
every inner-product space ${\cal R}_j$
we postulated the $j-$dependent Hermiticity property~(\ref{tri})
of $Z_j$ (cf. also the third column in Table \ref{o3x}).
At $j=0$ this relation
represents the most important
dynamical property of the Hamiltonian
while the rest of the list
concerns the $j-$dependent metrics.
From the point of view of the users working
in ${\cal H}_{math}$,
only the $j=N_{}$ item
 $$
 Z_{N_{}}
 =Z^{\ddagger (N_{})}_{N_{}}=Z^\dagger_{N_{}}
 $$
can directly be tested and verified.
The rest of the list with $j<N_{}$
will only be accessible to the verification
after its pull-down to  ${\cal H}_{math}$.

In
the first step of such a
process, in the manner proposed in \cite{PLA},
every Hermiticity relation
for a metric  (\ref{tri})
becomes re-interpreted
as a crypto-Hermiticity
requirement
 \be
 Z_j^{\ddagger(j+1)}\,Z_{j+1}=Z_{j+1}\,Z_j\,,\ \ \ \
 j=0,1,\ldots,K-2\,
 \label{koDYDY}
 \ee
imposed upon the same operator $Z_j$ in
the different inner-product context of
the neighboring,
more
friendly
space ${\cal R}_{j+1}$.
The series of transformations of the picture
to the more
friendly
space can be iterated.
For every
initial choice of the subscript $j$
the ultimate goal of the iterations
is to reach the formally equivalent representation
of property (\ref{tri}) in the mathematically optimal
space ${\cal H}_{math}={\cal R}_{K-1}$.


For illustration purposes let us now return to
the
${K}=3$ scenario
as discussed in \cite{PLA}.
The first-step quasi-Hermiticity
(\ref{koDYDY}) has
been shown there to imply
the
charge-pseudo-Hermiticity
of the
Hamiltonian
as well as the
parity-pseudo-Hermiticity
of the charge
(see
Table
Nr.~1
in {\it loc. cit.}).
The proof was based
on the antilexicographically ordered
relations
(\ref{sillytoe})
and on the observation that
$Z_1\,Z_0=Y_1=Y_1^{\ddagger(1)}$.
The latter relation is interesting also {\it per se\,}
because
the operator-product quantity
$Y_1$ is of an immediate phenomenological interest in the
random matrix theory \cite{joshua} or in the open-system physical
context (cf. Ref.~\cite{bian} where such a quantity
has been found conserved).

In the closed-system setting of Ref.~\cite{PLA}
the construction has been completed
by
the quasi-Hermiticity
$Y_1^{\ddagger(2)}\,Z_2=Z_2\,Y_1$
holding in
${\cal R}_2={\cal H}_{math}$.
This relation
proves equivalent
to the
parity-charge pseudo-Hermiticity
of the admissible Hamiltonians.
In this sense, the explicit reference to
the auxiliary space
${\cal R}_1$ and even to
the physical Hilbert space ${\cal R}_0$ itself becomes,
from the practical user's point of view,
redundant.

\begin{table}[h]
\caption{
Pseudo-Hermiticity properties of operators $Z_{k}$ at $k<j$.}
 \label{zo3x} \vspace{.4cm}
\centering
\begin{tabular}{||c|c|cccc||}
    \hline \hline
    $j$&&$k=j-1$&$k=j-2$&$k=j-3$&\ldots\\
    \hline \hline
     &  &
 {pseudo-metric}&
 {pseudo-metric}&
 {pseudo-metric}
 &
\\
    & {\rm space }
 &$Z_j
 $&$Y_{j}=Z_j
 Z_{j-1}
 $ &
    $X_{j}=Z_j
    Z_{j-1}Z_{j-2}
    $&\ldots\\
  \hline \hline
          &
    & & &    &
       \\
    $1$
       &
    ${\cal R}_1$&
   $Z_0^{\ddagger(1)}\,Z_1=Z_1\,Z_0$& & &
       \\
    $2$
       &
    ${\cal R}_2$  &
  $Z_1^{\ddagger(2)}\,Z_2=Z_2\,Z_1$&
  $Z_0^{\ddagger(2)}\,(Z_2\,Z_1)=(Z_2\,Z_1)\,Z_0$& &
       \\
    $3$
       &
    ${\cal R}_3$  &
  $Z_2^{\ddagger(3)}\,Z_3=Z_3\,Z_2$  &
  $Z_1^{\ddagger(3)}\,(Z_3\,Z_2)=(Z_3\,Z_2)\,Z_1$&
  $Z_0^{\ddagger(3)}\,(Z_3\,Z_2\,Z_1)=(Z_3\,Z_2\,Z_1)\,Z_0$&
       \\
      \vdots  &\vdots&\vdots&\vdots&\vdots & $\ddots$\\
\hline \hline
\end{tabular}
\end{table}

\subsubsection{$K>3$}

Once we move to $K>3$, every round of the formal
pseudo-Hermitization process can be perceived as a relegation
of Hermiticity to a more user-friendly space.
After a completion of the whole process
one obtains an $N_{}-$plet of the relevant relations
as sampled,
in Table \ref{zo3x},
at $j=N_{}$.
The step-by-step derivation of these relations is straightforward.
Indeed, the
second
round of the relegations involves
the
$({K-1})-$plet of operator products
 \be
 Z_j\,Z_{j-1}=Y_j=Y_j^{\ddagger(j)}\,,\ \ \ \
 j=1,2,\ldots,{K-1}\,.
 \label{Ylt}
 \ee
The  manifest Hermiticity of these products in
${\cal R}_{j}$
is
equivalent to their quasi-Hermiticity property
valid in the next space
${\cal R}_{j+1}$,
 \be
 Y_j^{\ddagger(j+1)}\,Z_{j+1}=Z_{j+1}\,Y_j
 \,,\ \ \ \
 j=1,2,\ldots,{K-2}\,.
 \label{Yrt}
 \ee
In full analogy with the special case of ISP, each operator
$Z_{j+1}\,$ plays here the role of an intermediate,
subscript-dependent
Hilbert- or Krein-space inner-product metric.

In the subsequent round of algebraic manipulations
we start from the the $Y_{j-1}-$containing operator products
 \be
 Z_j\,Y_{j-1}=X_j=X_j^{\ddagger(j)}\,,\ \ \ \
 j=2,3,\ldots,{K-1}\,.
 \ee
We replace their Hermiticity valid in
${\cal R}_{j}$ by the
quasi-Hermiticity
 \be
 X_j^{\ddagger(j+1)}\,Z_{j+1}=Z_{j+1}\,X_j
 \,,\ \ \ \
 j=2,3,\ldots,{K}-2\,
 \ee
postulated in the next Hilbert space
${\cal R}_{j+1}$. This is
in fact the key consequence of
our
assumption of existence of
the chain of Hermitian conjugations,
i.e., of several inner-product spaces.

The
core
of our message is the recommendation of
a systematic and exhaustive
iterative transfer of the description of all of the operators of
interest to ${\cal R}_{K-1}$. Such a process
is, therefore, naturally continued by the fourth round, with Hermiticities
 \be
 Z_j\,X_{j-1}=W_j=W_j^{\ddagger(j)}\,,\ \ \ \
 j=3,4,\ldots,{K-1}\,
 \ee
replaced by quasi-Hermiticites
 \be
 W_j^{\ddagger(j+1)}\,Z_{j+1}=Z_{j+1}\,W_j
 \,,\ \ \ \
 j=3,4,\ldots,{K-2}\,,
 \ee
etc. Tables~\ref{o3x} (implicitly) and \ref{zo3x}
(explicit when setting $j=N_{}$) offer a sample of the ultimate results
of these systematic replacements.

\section{Physics behind the multi-space mathematics\label{opere}}

The practical implementation
of the GSP formalism beyond its ISP special case
where we had ${N_{}}=K-1=2$
is a truly challenging task. In an indirect support
of its potential relevance let us
mention our older paper \cite{cptbe}.
At the time of its publication there was,
naturally, no
consistent SP
theory with $K=4$ at our disposal.
As long as the
model in question
exhibited a
nonlinear supersymmetry,
we needed
the charge ${\cal C}$
expressed as a product of two operators.
In the light of ansatz (\ref{fak}),
the physical Hilbert-space metric
would have to be triply factorized, therefore,
$\Theta_3=Z_3Z_2Z_1$.
In principle, the $j={N_{}}=3$ line of
our present Table \ref{zo3x}
would apply.
In retrospective, the non-availability of
a consistent GSP quantum theory with $K=4$
was precisely the reason
why the results of our study \cite{cptbe}
remained incomplete.

\subsection{Observability status of the charge at arbitrary $K$}

In the general$-K$ multiplet (\ref{multpl}) one
has to treat  the first item  ${\cal H}_{math}=
{\cal R}_{K-1}$
of the sequence as
the mathematically optimal, preferred and
manipulation-friendly Hilbert space.
The last item ${\cal H}_{phys}=
{\cal R}_{0}$
of the same sequence is, in contrast, the only Hilbert space
in which
the mean values of all of the observables
carry  the standard probabilistic interpretation, i.e., in which
the inner product
is
correct and physical.
All of the other inner-product spaces
play just an interpolative and auxiliary role.
In Table~\ref{zo3x}, in particular, one has to
select and, in practice,
work just with the single row where $j={N_{}}=K-1$.

At any $K$,
in the manner fully consistent with Stone theorem \cite{Stone},
Hamiltonians $H=Z_0\,$ have to be self-adjoint
in
${\cal H}_{phys}= {\cal R}_0$.
Simultaneously,
{\em all\,} of the relevant mathematical constructions
have to be performed
in working space
${\cal R}_{N_{}}={\cal H}_{math}$. Contextually,
the same Hamiltonian
can be called
Hermitian or non-Hermitian.
In the Hermitian case one simply
returns to the conventional probabilistic
interpretation of the unitary system in question.
In this sense
the physical interpretation of the predictions of the GSP
theory using $K>1$
remains unchanged.
Only our mathematical working space
ceases to be equivalent to the correct
physical Hilbert space
so that the
operators $\Lambda$ representing the observables
are defined, in
${\cal R}_{N_{}}={\cal H}_{math}$,  as non-Hermitian.

The physics behind the ``non-Hermitian''
theory acquires the universal, $K-$independent
CHSP form,
with its various
phenomenological aspects extensively discussed in Ref.~\cite{Geyer}.
In our present notation
one merely
represents
${\cal H}_{phys}={\cal R}_{0}$ in ${\cal R}_{N_{}}$
using a formal replacement
of the trivial metric in ${\cal R}_{N_{}}={\cal H}_{math}$
by
its
amended physical alternative.
This enables us to work, simultaneously, with
the two inner products
such that
 $$
 \br_{} \psi_a|\psi_b\kt=
 \br_{[N_{}]} \psi_a|\psi_b\kt\neq
 \br_{[0]} \psi_a|\psi_b\kt=
 \br_{[N_{}]} \psi_a|\Theta|\psi_b\kt=
 \br_{} \psi_a|\Theta|\psi_b\kt\,,
 $$
keeping in mind that only the latter inner product has
the standard probabilistic interpretation.

In the most elementary nontrivial scenario we
set $N_{}=2$ and
get the ISP formalism with $\Theta=Y_2$.
Equally well, we may
choose any larger $N_{}>2$ and
get the genuine GSP formalism.
Under both of these arrangements our Hamiltonian
only carries
its conventional physical
meaning of an observable in ${\cal R}_0={\cal H}_{phys}$.
Hence, even the predictions of
the generalized theory remain
probabilistic. The model-building process
does not proceed in the conventional physical Hilbert space
${\cal H}_{phys}$ but rather, step-by-step,
along
the sequence of
manifestly unphysical but perceivably user-friendlier
mathematical representation spaces ${\cal R}_{j}$
with the decreasing subscript $j$.

There exist two
main distinguishing features of the GSP models with $K>2$.
The first one
is formal: the correct physical metric
$\Theta$ which makes our Hamiltonian $H$ self-adjoint
is equal to an $N_{}-$term
operator product
which generalizes Eq.~(\ref{fak}).
The first few illustrative examples
are displayed as the $k=0$ items in Table \ref{zo3x}.

The second distinguishing feature of the GSP models is rather
serendipitious: One can notice that after a tentative
premultiplication by the charge ${\cal C}=Z_1$ from the right, also
the
$k=1$ pseudo-Hermiticities as sampled in Table \ref{zo3x}
acquire, unexpectedly, the form equivalent to
the precise crypto-Hermiticity  criterion
 \be
 {\cal C} \neq
 {\cal C}^\dagger=\Theta
 \,{\cal C}\,\Theta^{-1}\,.
 \ee
Once we compare this relation with Eq.~(\ref{quasi}) above
we see that the most unexpected but strictly physical second
characteristics of the GSP
theory is that it guarantees the observability of
the crypto-Hermitian Hamiltonian (i.e., energy)
{\em together with\,} the observability of
the crypto-Hermitian charge.

\subsection{Recurrences for conjugations}

The $j-$th-Hilbert-space Hermiticity [e.g., (\ref{tri})] and the
related $(j+1)-$th-Hilbert-space quasi-Hermiticity of an arbitrary
linear operator $\Lambda$ can be made explicit in recurrent manner,
 \be
 \Lambda^{\ddagger(j)} =(Z_{j+1})^{-1}\,
 \Lambda^{\ddagger(j+1)}\,Z_{j+1}\,,
 \ \ \ \ \ j=0,1,\ldots,{K-2}\,.
 \label{24}
 \ee
This pull-down of the conjugation connecting the neighboring Hilbert
spaces is an elementary consequence of definition (\ref{reccu})
which
played just a marginal role in Ref.~\cite{PLA} at $K=3$
(cf. equation Nr. 10 in
{\it loc. cit.}). The importance of relation (\ref{24})
grows with the growth of $K$.
At any $K\geq 3$
it leads to the
closed-form definition
 \be
 \Lambda^{\ddagger(j)} =
 \Theta_{(K-1,j)}^{-1}\,
 \Lambda^{\dagger}
 \,\Theta_{(K-1,j)}
 \,,\ \ \ \ \
  \Theta_{(K-1,j)}=Z_{{K-1}}\,Z_{{K-2}}\,\ldots\,Z_{j+2}\,Z_{j+1}\,
 \label{closef}
 \ee
of the $j-$th conjugation
in terms of the conventional one as defined in ${\cal H}_{math}$.

For illustration one could return to Tables \ref{o3x} and \ref{zo3x}.
A detailed inspection of the Tables
indicates that
at any preselected ${K}$
the relegation of the Hermiticity
will terminate only after we manage to exhaust
the whole set of products (\ref{sillytoe}).
The systematically iterated
step-by-step replacements
 $$
 Z_0 \, (=H) \to Y_1 \to
 X_2 \to  W_3 \to \ldots \to \Theta
 $$
will then guarantee
the self-adjointness of the Hamiltonian
in the correct space ${\cal H}_{[phys}={\cal R}_{{0}}$
re-expressed as its
quasi-Hermiticity
in ${\cal H}_{math}={\cal R}_{{K-1}}$.

An analogous chain of relegations of Hermiticity
remains applicable
to the charge and/or to any
other operator of phenomenological relevance.
In the spaces which are infinite-dimensional
it would be necessary to
discuss multiple technical questions
concerning the domains of operators, etc.
Here, we
skipped these questions and
restricted our attention just to the detailed
discussion of what could be called
the underlying algebraic relations and
symmetries.

\subsubsection{$K=3$}

At ${K}= 3$ the closed-form solution of the
GSP
recurrences
was given in~\cite{PLA}. For methodical
purposes it still makes sense to start the presentation of
this solution
by the tutorial re-derivation of the
first nontrivial ${K}=3$ ISP
pattern.
Indeed, in our present notation
the correspondence between  ${\cal H}_{phys}$ and
${\cal H}_{math}$
can be seen as mediated either by a single-step
simplification
of the inner product
$\br_{[0]} \psi_a|\psi_b\kt \
\to \ \br_{} \psi_a|\psi_b\kt$ (at $K=2$), or by a two-step
realization in which the preparatory step
$\br_{[0]} \psi_a|\psi_b\kt \
\to \ \br_{[1]} \psi_a|\psi_b\kt$ is followed by the final step
$\br_{[1]} \psi_a|\psi_b\kt \
\to \ \br_{} \psi_a|\psi_b\kt$ (at $K=3$).

Naturally,
after an elementary algebraic exercise
one can show that the phenomenological consequences of the
two recipes are equivalent.
Indeed, although the latter, ISP pattern with $K=3$ is prescribed
by the three inner-product-dependent Hermiticity
relations (\ref{tri}) which read
 \be
 Z_0=Z_0^{\ddagger(0)}\,,\ \ \ \
 Z_1=Z_1^{\ddagger(1)}\,,\ \ \ \
 \underline{ Z_2=Z_2^{\ddagger(2)}}\,,
 \label{2tri}
 \ee
only
the last, underlined item is
in its final form living in ${\cal H}_{math}={\cal R}_2$.
For the other two relations
we still have to find their representation
using the ${\ddagger(2)}-$marked
Hermitian conjugation defined
in the user-friendliest and
preferred
Hilbert space
${\cal R}_{2}={\cal H}_{math}$.
For this purpose let us recall Eq.~(\ref{koDYDY}) and
re-express the two former relations in the respective
quasi-Hermitian
forms,
 \be
 Z_0^{\ddagger(1)}\,Z_{1}=Z_{1}\,Z_0\,,\ \ \ \
 \underline{Z_1^{\ddagger(2)}\,Z_{2}=Z_{2}\,Z_1}\,.
 \label{2koDYDY}
 \ee
The second, underlined formula
is final as it already lives in ${\cal H}_{math}$.
The former item
requires a further translation using
the first formula in (\ref{Ylt}),
 \be
 Z_1\,Z_{0}=Y_1=Y_1^{\ddagger(1)}\,.
 \label{2Ylt}
 \ee
This is easily transferred to
${\cal H}_{math}$
via Eq.~(\ref{Yrt}),
 \be
 Y_1^{\ddagger(2)}\,Z_{2}=Z_{2}\,Y_1
 \,.
 \label{2Yrt}
 \ee
What remains to be done is the
insertion of definition (\ref{2Ylt}),
 \be
 Z_0^{\ddagger(2)}\,Z_1^{\ddagger(2)}\,Z_{2}=Z_{2}\,Z_1\,Z_{0}
 \,
 \label{i2Yrt}
 \ee
and the incorporation of Eq.~(\ref{2koDYDY}) yielding
our final underlined equation
 \be
 \underline{Z_0^{\ddagger(2)}\,\Theta_2=\Theta_2\,Z_{0}}
 \,,\ \ \ \ \ \ \Theta_2=Z_{2}\,Z_1\,.
 \label{i2Yrtb}
 \ee
This is the ultimate form of the quasi-Hermiticity of the
Hamiltonian written in terms of the Hilbert space metric
$\Theta_{K-1}$ at ${K}=3$.

We see that all of the auxiliary,
intermediate Hilbert spaces
${\cal R}_j$ with $0<j<{K-1}$ have been successfully eliminated.
In terms of the operator-product metric $\Theta_2$
the observability of $H$ is fully guaranteed, strictly
in the spirit of review \cite{Geyer},
by
its quasi-Hermiticity property (\ref{i2Yrtb}) in ${\cal H}_{math}$.
A consistent quantum model
can be constructed
in which all observables $\Lambda$ would
have to satisfy the
same quasi-Hermiticity relation as $H$ itself does.
Besides $Z_0$, as we already mentioned, also operator
$Z_1$ represents an observable
quantity [for proof it is sufficient to
pre-multiply the underlined equation in Eq.~(\ref{2koDYDY})
by $Z_1$ from the right]. Analogously, the product
$Z_2\,Z_1$ can represent another observable (the proof is similar)
while
$Z_2$, when standing alone, cannot.



\subsubsection{$K=4$}

At $N_{}=3$, the quadruplet of the Hermiticity relations (\ref{tri})
can be split in the
underlined final rule for $\underline{Z_3=Z_3^{\ddagger(3)}}$
in ${\cal H}_{math}$ and the remaining auxiliary triplet
 \be
 Z_0=Z_0^{\ddagger(0)}\,,\ \ \ \
 Z_1=Z_1^{\ddagger(1)}\,,\ \ \ \
 Z_2=Z_2^{\ddagger(2)}\,,
 \label{3tri}
 \ee
to be replaced by its quasi-Hermitian analogue (\ref{koDYDY}).
This yields, first of all, the last,
underlined relation $\underline{Z_2^{\ddagger(3)}\,Z_{3}=Z_{3}\,Z_2}\ $
for $Z_2$ which is already in its desired final form.
In ${\cal H}_{math}={\cal R}_{3}$ this makes
the quasi- or pseudo-Hermiticity of
$Z_2$ perceived
as mediated,
under appropriate mathematical conditions,
by metric $Z_{3}$.
The remaining doublet of relations
 \be
 Z_0^{\ddagger(1)}\,Z_{1}=Z_{1}\,Z_0\ (=Y_1)\,,\ \ \ \
 Z_1^{\ddagger(2)}\,Z_{2}=Z_{2}\,Z_1\ (=Y_2)
 \label{3koDYDY}
 \ee
is restricted by the respective Hermiticity requirements
$Y_1=Y_1^{\ddagger(1)}$ and $Y_2=Y_2^{\ddagger(2)}$
[cf. Eq.~(\ref{Ylt})].
Using relation (\ref{Yrt}) they may be both quasi-Hermitized,
 \be
 Y_1^{\ddagger(2)}\,Z_{2}=Z_{2}\,Y_1\ (=X_2)
 \,,\ \ \ \
 \underline{Y_2^{\ddagger(3)}\,Z_{3}=Z_{3}\,Y_2\ (=X_3)}
 \,.
 \label{3Yrt}
 \ee
The second, underlined item is already
written in the correct space ${\cal H}_{math}={\cal R}_{3}$.
The insertion of $Y_2$ from the second line of Eq.~(\ref{3koDYDY})
gives
 $
 \underline{Z_1^{\ddagger(3)}\,(Z_{3}\,Z_{2})=(Z_{3}\,Z_2)\,Z_{1}}
 $, i.e., the correct final rule for $Z_1$ formulated in ${\cal H}_{math}$.

We are left with the first item in (\ref{3Yrt}).
It can readily be quasi-Hermitized in ${\cal H}_{math}$,
 \be
 X_2^{\ddagger(3)}\,Z_{3}=Z_{3}\,X_2\,.
 \ee
After a series of elementary insertions we finally get
 \be
 \underline{Z_0^{\ddagger(3)}\,\Theta_3=\Theta_3\,Z_{0}}
 \,,\ \ \ \ \ \ \Theta_3=Z_{3}\,Z_{2}\,Z_1\,.
 \label{i3Yrt}
 \ee
These conclusions are again
summarized in Table~\ref{zo3x}.
The role of an observable which would be manifestly self-adjoint in
${\cal H}_{phys}$ can now be played by $Z_1$, by the product $Z_2\,Z_1$ and
by the product $Z_3\,Z_2\,Z_1$, but not by the product $Z_3Z_2$ or
by the operators $Z_2$ or $Z_3$ standing alone (the proofs are
analogous to the ones given above). Thus, in the
more conventional notation
one can conclude that  the ``hidden Hermiticity'' (i.e., the
observability) status holds for $H$ and for the charge ${\cal
C}=Z_1$. The observability status of the ``other charge'' ${\cal D}=Z_2$
(i.e., the validity of condition ${\cal
D}^\dagger\,\Theta_3=\Theta_3\,{\cal D}$) could only be achieved,
in a sufficiently elementary manner,
under an additional commutativity requirement ${\cal D}\,{\cal
C}={\cal C}\,{\cal D}$. Otherwise,
the observability status of ${\cal D}$ would require
a complicated analysis as sampled
in Ref.~\cite{arabky}.



\subsubsection{General $K$}

Among the ${K}$ Hermiticity relations (\ref{tri}) valid (or
postulated) for the operators $Z_j$, the last one (with $j=K-1$) is
a ${\cal H}_{math}-$space property of $Z_{K-1}$. Among the ${K-1}$
Hermiticity relations (\ref{koDYDY}) for the two-term products of
$Z_j$s, the last one (with $j=K-2$) is always a valid Hermiticity
property of $Y_{K-1}$ (i.e., a valid quasi- or pseudo-Hermiticity
property of $Z_{{K-2}}$) in the same ${\cal H}_{math}-$space. Etc.
Along this line we finally arrive at $j=0$ and Hamiltonian $H=Z_0$.
An elementary proof by mathematical induction really yields, in the
space ${\cal R}_{K-1}={\cal H}_{math}$, the quasi-Hermiticity
relation
 \be
 \underline{Z_0^{\ddagger({K-1})}\,\Theta_{K-1}\,\equiv\,
  Z_0^{\dagger}\,\Theta_{K-1}=\Theta_{K-1}\,Z_{0}}
 \,,\ \ \ \ \ \ \Theta_{K-1}=Z_{{K-1}}\,Z_{{K-2}}\,\ldots\,Z_{2}\,Z_1\,.
 \label{iYrt}
 \ee
This is our ultimate, experimentally relevant property of the
Hamiltonian of the system in question.

In the latter relation the ultimate metric operator $\Theta_{K-1}$
is factorized in a way which is compatible with the last $j=0$ item
in formula (\ref{altoe}) defining the ultimate and correct physical
bra-vector. An internal consistency of the GSP formalism is
confirmed. The observation also opens the possibility of a removal,
in the nearest future, of our present methodical limitation of
attention to the specific SP formulation of quantum mechanics.

In this direction, two next steps of possible methodical development
seem most promising. In the first one the present stationary GSP
theory could be replaced by its non-stationary generalization. This
would lead to the implementation of the present metric-factorization
idea in the non-SP context of the so called non-Hermitian
interaction picture (interested readers might find its $K=2$
description in \cite{NIP}).

In the second, alternative direction of research our present
factorization of the metric could also inspire a further progress in
the context of quantum statistical mechanics. In it, one would
merely replace a pure state characterized, in our present notation,
by the elementary projector
 \be
 \pi(t)=|\psi(t)\kt \,\frac{1}{\br
 \psi_{[0]}(t)|\psi(t)\kt}\,\br \psi_{[0]}(t)|\,
 \ee
by the non-Hermitian density matrix
 \be
 {\varrho}(t)=\sum_{k}|\psi^{(k)}(t)\kt
 \,\frac{p_k}{\br \psi^{(k)}_{[0]}(t)|\psi^{(k)}(t)\kt}\,\br
 \psi^{(k)}_{[0]}(t)|
 \,,
 \ \ \ \ \ \
 \sum_{k} p_k=1\,
 \ee
which would describe the probability distribution of a statistical
mixture of states. This density matrix would have the well known
physical meaning combining the probability-based experimental
preparation of a statistical quantum system with its generalized
$K>1$ theoretical description.

\section{Generalized Dyson maps\label{dymas}}

In the majority of applications of conventional quantum theory
the information about dynamics (i.e.,
about the realistic self-adjoint
Hamiltonian $\mathfrak{h}$) is
extracted, directly or indirectly, from the principle of
correspondence.
All interest in some unconventional,
crypto-Hermitian Hamiltonians emerges only
after one encounters some unsurmountable technical difficulties.
One then imagines that
the conventional formulations of quantum dynamics need not be optimal.
In such a situation, Dyson \cite{Dyson}
found, purely empirically, that his calculations prove perceivably
simplified after a reformulation of the SP theory in which one would
work with two inner-product spaces (i.e., in our
present notation, with ${K}=2$).

The motivation of the
Dyson's
construction was pragmatic,
based on a good intuitive guess of the
operator $\Omega$ representing correlations.
In fact, the strong dependence of the success
on this guess was one of the main
weak points of the strategy.
Thus, one has to ask how should one
amend this aspect, and how could one make the choice of
the correlators $\Omega$
more robust and flexible.

\subsection{Multi-step preconditionings\label{gedymas}}

In the most elementary $K=2$ CHSP scenario
it is not easy to keep the physics given by
the respective candidates $\mathfrak{h}$ and $\Omega\,$
for the Hamiltonian and mapping
fully under control.
One must be even more careful when
using the inverted flowchart $H \to \Theta \to \Omega$
as discussed by Scholtz et al \cite{Geyer}.
In an ideal situation,
the determination of an optimal $\Omega$
should not be based on the mere
educated guess.
One may expect an improvement of the results, in particular, after
the above-described multi-step
relegation of the
Hermiticity of
operators from ${\cal R}_{j}$ to ${\cal R}_{j+1}$.

What seems lost is
the reference
to the idea of
preconditioning, i.e., an intuition-based insight
in the dynamics. This is a shortcoming of the theory which damages the
appeal and popularity of the CHSP formalism in applications.
A step towards recovery came with PTQM, i.e., with the
proposal of
factorization (\ref{fak}).
Still,
a certain theoretical weakness of the recipe survived,
lying in a
somewhat mysterious status of the charge \cite{ali} as well as
in a manifestly non-Dysonian nature of the Bender's
factorization $\Theta_2={\cal PC}$ of the metric.

The gap has partially been filled in \cite{PLA}.
We found there that the charge
${\cal C}$ becomes in fact a very natural component of the
formalism.
Still, our
satisfaction remained limited because the operator-product
metric $\Theta={\cal PC}$ did not seem to be
easily re-factorized
(i.e., in effect, interpreted) in terms of a
single Dyson map $\Omega$.
Also
the most common re-factorization
$\Omega=\sqrt{\Theta}$ of the metric (as recommended, e.g.,
in \cite{ali}) may often prove too artificial.

In \cite{PLA},
precisely the latter artificiality impression
caused our loss of
interest in the Hermiticity-to-quasi-Hermiticity
relegations at ${K}\geq 4$.
We did not imagine that a consistent
concept of the Dyson maps
and of their compositions
might
very naturally be connected with
the notion of the space-dependent Hermiticity.

\subsection{Composition laws}

In
a way inspired by Eq.~(\ref{amenes}) let us
postulate a
factorization
 \be
 Z_j=\Omega_j^{\ddagger(j)}\,\Omega_j\,,\ \ \ \
 j=1,\ldots,{K-1}\,.
 \label{btri}
 \ee
Such an ansatz is, first of all, compatible with our present
$j-$dependent
self-adjointness (\ref{tri}) of $Z_j$ in ${\cal R}_j$.
It allows us to
reclassify the new family of the invertible operators
$\Omega_j$ as an upgraded
model-building-information input.

\begin{lemma}
\label{lemko1}
At ${K}=3$ let us assume that both
$Z_2$ and $Z_1$ are positive definite and
factorized via Eq.~(\ref{btri}). In ${\cal H}_{math}$,
the physical Hilbert-space metric $\Theta_2=Z_2Z_1$
is then factorized as follows,
 \be
 \Theta_2=\Omega_{21}^{\ddagger(2)}\,\Omega_{21}
 =\Omega_{21}^{\dagger}\,\Omega_{21}\,,
 \ \ \ \ \ \Omega_{21}=\Omega_{2}\,\Omega_{1}\,.
 \label{amen2}
 \ee
\end{lemma}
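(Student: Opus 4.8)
The plan is to reduce everything to the ordinary Hermitian conjugation in ${\cal H}_{math}={\cal R}_2$ and then simply compose the two maps. First I would recall that at $K=3$ the top-space conjugation $\ddagger(2)$ coincides with the conventional Dirac dagger, $\Lambda^{\ddagger(2)}=\Lambda^{\dagger}$, so the second equality in (\ref{amen2}) is automatic once the first is established. The postulated positive definiteness of $Z_2$ and $Z_1$ (together with the factorizations (\ref{btri})) guarantees that $\Omega_2$ and $\Omega_1$ are invertible, hence so is $\Omega_{21}=\Omega_2\,\Omega_1$; this is precisely what will make the resulting $\Theta_2$ come out positive definite, as a legitimate metric must be.

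The key step is to pull the intermediate-space conjugation $\ddagger(1)$ down to ${\cal H}_{math}$. Using the recurrence (\ref{24}) with $j=1$ one has $\Lambda^{\ddagger(1)}=Z_2^{-1}\,\Lambda^{\dagger}\,Z_2$ for every linear operator $\Lambda$; applied to $\Lambda=\Omega_1$ this yields $\Omega_1^{\ddagger(1)}=Z_2^{-1}\,\Omega_1^{\dagger}\,Z_2$. Substituting this into the assumed factorization $Z_1=\Omega_1^{\ddagger(1)}\,\Omega_1$ gives $Z_1=Z_2^{-1}\,\Omega_1^{\dagger}\,Z_2\,\Omega_1$.

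It then only remains to premultiply by $Z_2$ and use $Z_2=\Omega_2^{\dagger}\,\Omega_2$:
\be
 \Theta_2=Z_2\,Z_1=\Omega_1^{\dagger}\,Z_2\,\Omega_1
 =\Omega_1^{\dagger}\,\Omega_2^{\dagger}\,\Omega_2\,\Omega_1
 =(\Omega_2\Omega_1)^{\dagger}\,(\Omega_2\Omega_1)
 =\Omega_{21}^{\dagger}\,\Omega_{21}\,,
\ee
which, combined with $\ddagger(2)=\dagger$, is exactly the assertion (\ref{amen2}). The argument is in essence a one-line cancellation of $Z_2\,Z_2^{-1}$, so I do not expect a genuine obstacle here; the only points that need care are the legitimacy of the pull-down identity (\ref{24}) — which is the elementary consequence of definition (\ref{reccu}) already recorded above — and the bookkeeping of which conjugation lives in which space, so that the two Dyson maps compose in the correct antilexicographic order $\Omega_{21}=\Omega_2\,\Omega_1$ rather than the reverse. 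The same telescoping cancellation is what one would iterate to treat larger $K$.
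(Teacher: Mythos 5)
Your argument is correct and is essentially the paper's own proof: both rest on the pull-down identity (\ref{24}) applied to $\Omega_1$, namely $\Omega_1^{\ddagger(1)}=Z_2^{-1}\,\Omega_1^{\ddagger(2)}\,Z_2=Z_2^{-1}\,\Omega_1^{\dagger}\,Z_2$, followed by insertion into $\Theta_2=Z_2\,Z_1$ and the cancellation $Z_2\,Z_2^{-1}$, with $Z_2=\Omega_2^{\dagger}\,\Omega_2$ completing the factorization. Your version merely spells out the steps (and the role of invertibility) that the paper compresses into two lines, so there is nothing to correct.
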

\begin{proof}
Relation (\ref{24}) implies that in the factorization postulate
$Z_1=\Omega_{1}^{\ddagger(1)}\,\Omega_{1}$ we can relegate the
Hermiticity of
$\Omega_{1}^{\ddagger(1)}=Z_2^{-1}\,\Omega_{1}^{\ddagger(2)}\,Z_2$.
The insertion in $\Theta_2=Z_2\,Z_1$ yields
the result.
\end{proof}

\begin{lemma}
\label{lemko2}
At ${K}=4$ let us assume that operators
$Z_j$ with $j=1,2$ and $3$ are positive definite and factorized via
Eq.~(\ref{btri}). In ${\cal H}_{math}$,
the physical Hilbert-space metric
is then factorized as follows,
 \be
 \Theta_3=Z_3Z_2Z_1
 =\Omega_{321}^{\ddagger(3)}\,\Omega_{321}
 =\Omega_{321}^{\dagger}\,\Omega_{321}\,,
 \ \ \ \ \ \Omega_{321}=\Omega_{3}\,\Omega_{2}\,\Omega_{1}\,.
 \label{amen3}
 \ee
\end{lemma}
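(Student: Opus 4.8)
The plan is to mimic the proof of Lemma \ref{lemko1}, but apply the relegation relation (\ref{24}) twice, peeling off one factor at a time starting from the least friendly space. The key input is the recurrence for conjugations, $\Lambda^{\ddagger(j)}=(Z_{j+1})^{-1}\,\Lambda^{\ddagger(j+1)}\,Z_{j+1}$, together with the closed form (\ref{closef}), which express any low-index conjugation in terms of the ${\cal H}_{math}$ conjugation $\dagger$ conjugated by an appropriate product of the $Z_k$'s.

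First I would take the factorization postulate $Z_1=\Omega_1^{\ddagger(1)}\,\Omega_1$ and relegate the conjugation $\ddagger(1)$ all the way up to $\ddagger(3)=\dagger$. By (\ref{closef}) with $K=4$ and $j=1$ this gives $\Omega_1^{\ddagger(1)}=\Theta_{(3,1)}^{-1}\,\Omega_1^{\dagger}\,\Theta_{(3,1)}$ with $\Theta_{(3,1)}=Z_3Z_2$, so that $Z_1=(Z_3Z_2)^{-1}\,\Omega_1^{\dagger}\,(Z_3Z_2)\,\Omega_1$. Similarly I would relegate $Z_2=\Omega_2^{\ddagger(2)}\,\Omega_2$ to ${\cal H}_{math}$ using (\ref{closef}) with $j=2$, i.e. $\Theta_{(3,2)}=Z_3$, obtaining $Z_2=Z_3^{-1}\,\Omega_2^{\dagger}\,Z_3\,\Omega_2$; and $Z_3=\Omega_3^{\ddagger(3)}\,\Omega_3=\Omega_3^{\dagger}\,\Omega_3$ is already in final form. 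Then I substitute these three expressions into $\Theta_3=Z_3Z_2Z_1$ and watch the inverse factors telescope: the $Z_3^{-1}$ coming from the rewritten $Z_2$ cancels the leading $Z_3$, the $(Z_3Z_2)^{-1}$ from the rewritten $Z_1$ cancels against the accumulated $Z_3Z_2$ sitting to its left, and what remains is $\Omega_3^{\dagger}\,\Omega_2^{\dagger}\,\Omega_1^{\dagger}\,\Omega_3\Omega_2\Omega_1 = (\Omega_3\Omega_2\Omega_1)^{\dagger}(\Omega_3\Omega_2\Omega_1)=\Omega_{321}^{\dagger}\,\Omega_{321}$, as claimed. Since each $Z_j$ is positive definite the $\Omega_j$ are invertible, hence so is $\Omega_{321}$, and $\Theta_3$ is manifestly positive. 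The identity $\Omega_{321}^{\ddagger(3)}=\Omega_{321}^{\dagger}$ is just the fact that $\ddagger(3)$ \emph{is} the ${\cal H}_{math}$ conjugation, so both displayed equalities in (\ref{amen3}) follow at once.

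The main obstacle — really the only place where care is needed — is bookkeeping the order of the operator products when relegating: one must be sure that applying (\ref{24}) repeatedly to $\Omega_1^{\ddagger(1)}$ produces precisely the prefactor $(Z_3Z_2)^{-1}$ on the left and $(Z_3Z_2)$ on the right (not, say, $(Z_2Z_3)^{-1}$), so that the cancellation against $Z_3Z_2 = \Theta_{(3,1)}$ accumulated from the earlier factors is exact. This is exactly the content of the closed form (\ref{closef}), so invoking it directly rather than iterating (\ref{24}) by hand keeps the argument clean. The only other point worth a remark is that positivity of each $Z_j$ is used solely to guarantee invertibility of $\Omega_j$ (and hence of $\Omega_{321}$) and to make the final factorization a genuine metric factorization; it plays no role in the algebraic telescoping itself. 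The generalization to arbitrary $K$ is then immediate by the same telescoping, giving $\Theta_{K-1}=\Omega_{K-1}^{\dagger}\cdots\Omega_1^{\dagger}\,\Omega_{K-1}\cdots\Omega_1$, which one could record as a corollary.
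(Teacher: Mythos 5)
Your proof is correct and follows essentially the paper's own route: the paper relegates the conjugations by applying Eq.~(\ref{24}) step by step with $\Lambda=\Omega_1$ and then $\Lambda=\Omega_2$, while you invoke the packaged closed form~(\ref{closef}) for each factor $Z_1$, $Z_2$ and substitute into $\Theta_3=Z_3Z_2Z_1$ — the same telescoping computation in a slightly different order. One transcription slip: the cancellation leaves $\Omega_1^{\dagger}\Omega_2^{\dagger}\Omega_3^{\dagger}\,\Omega_3\Omega_2\Omega_1$, not $\Omega_3^{\dagger}\Omega_2^{\dagger}\Omega_1^{\dagger}\,\Omega_3\Omega_2\Omega_1$ as written in your intermediate line, though your final identification with $(\Omega_3\Omega_2\Omega_1)^{\dagger}(\Omega_3\Omega_2\Omega_1)=\Omega_{321}^{\dagger}\Omega_{321}$ is the correct one.
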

\begin{proof}
As long as the superscript $^{\ddagger(j)}$ denotes
the usual Hermitian conjugation in
our unique representation space ${\cal H}_{math}$ at $j={K-1}=3$,
we have to eliminate, from
the definition of $\Theta_3=Z_3\,Z_2\,Z_1$, just the
two unusual Hermitian conjugation
superscripts $^{\ddagger(j)}$
with $j=1$ and $j=2$.
We need, for this purpose,
just the two items in
relations (\ref{24}), viz.,
\be
 Z_{2}\,\Lambda^{\ddagger(1)} =
 \Lambda^{\ddagger(2)}\,Z_{2}
 \label{3124}
 \ee
and
 \be
  Z_{3}\,\Lambda^{\ddagger(2)} =
 \Lambda^{\ddagger(3)}\,Z_{3}
 = \Lambda^{\dagger}\,Z_{3}\,,
 \label{3224}
 \ee
respectively. With $\Lambda=\Omega_1$
we get $$\Theta_3=Z_3\,\Omega_1^{\ddagger(2)}\,Z_2\,\Omega_1$$
from the former identity, while
we further get $$\Theta_3=\Omega_1^{\dagger}\,Z_3\,Z_2\,\Omega_1$$
from the latter identity.
In the third step, the factorization (\ref{btri}) at $j=2$
and the use of Eq.~(\ref{3224}) with $\Lambda=\Omega_2$
lead immediately to the final result.
\end{proof}

The structure of the latter proof reveals its recursive nature.
Along the same lines it is now easy to prove the general result
(the task is left to the readers).

\begin{thm}
\label{theodor}
At any integer ${K}\geq 2$
the existence of decompositions~(\ref{btri})
of the positive definite
operators $Z_j$ at $j=1,2,\ldots,{K-1}$
is formally equivalent to
the existence of a refactorization
 \be
 \Theta_{K-1}=
 \Omega_{{K-1}\ldots 321}^{\dagger}\,\Omega_{{K-1}\ldots 321}\,,
 \ \ \ \ \ \Omega_{{K-1}\ldots 321}=\Omega_{{K-1}}
 \,\ldots \,\Omega_{3}\,\Omega_{2}\,\Omega_{1}\,,
 \label{acto}
 \ee
of the Hilbert-space metric
 \be
 \Theta_{K-1}=Z_{{K-1}}\,Z_{{K-2}}\,\ldots\,Z_{2}\,Z_1\,
 \label{amenK}
 \ee
which is responsible for the
quasi-Hermiticity (\ref{iYrt})
of the Hamiltonian
$H=Z_0$ in ${\cal H}_{math}$.
\end{thm}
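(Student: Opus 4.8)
\medskip
\noindent\textbf{Proof proposal.}
The plan is to obtain the biconditional from the single algebraic identity already recorded as Eq.~(\ref{closef}), which I would first rewrite in the form
$$
 Z_{K-1}\,Z_{K-2}\,\ldots\,Z_{j+1}\,\Lambda^{\ddagger(j)}
 \ =\ \Lambda^{\dagger}\,Z_{K-1}\,Z_{K-2}\,\ldots\,Z_{j+1}\,,
 \qquad j=0,1,\ldots,K-2\,,
$$
valid for every linear operator $\Lambda$, together with the convention $\Lambda^{\ddagger(K-1)}=\Lambda^{\dagger}$ and with the $j$-dependent self-adjointness~(\ref{tri}) of $Z_j$ in ${\cal R}_j$ (which is implied by the factorization~(\ref{btri})). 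The argument runs by induction on $K$. For $K=2$ there is nothing to do: ${\cal R}_{K-1}={\cal R}_1={\cal H}_{math}$, the product~(\ref{amenK}) is the single factor $\Theta_1=Z_1$, and~(\ref{btri}) at $j=1$ reads $\Theta_1=Z_1=\Omega_1^{\ddagger(1)}\Omega_1=\Omega_1^{\dagger}\Omega_1$, which is~(\ref{acto}). The cases $K=3$ and $K=4$ are Lemmas~\ref{lemko1} and~\ref{lemko2}, whose proofs already exhibit the mechanism in full.

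For the forward implication at general $K$, assume~(\ref{btri}) and positive-definiteness of the $Z_j$ for $j=1,\ldots,K-1$ (positivity is used only to guarantee the invertibility needed to write the above identity). I would peel the innermost factor of $\Theta_{K-1}=Z_{K-1}\,\ldots\,Z_2\,Z_1$ off to the outside: substituting $Z_1=\Omega_1^{\ddagger(1)}\Omega_1$ and applying the displayed identity with $j=1$ and $\Lambda=\Omega_1$,
$$
 \Theta_{K-1}\ =\ Z_{K-1}\,\ldots\,Z_2\,\Omega_1^{\ddagger(1)}\,\Omega_1
 \ =\ \Omega_1^{\dagger}\,(Z_{K-1}\,\ldots\,Z_2)\,\Omega_1\,.
$$
The operator in parentheses is the metric of the ``shifted'' $(K-1)$-space chain ${\cal R}_1,{\cal R}_2,\ldots,{\cal R}_{K-1}$: its mathematical space is still ${\cal H}_{math}$, its conjugations are merely re-indexed, and its factorizing operators are $\Omega_2,\ldots,\Omega_{K-1}$. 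The induction hypothesis therefore gives $Z_{K-1}\,\ldots\,Z_2=(\Omega_{K-1}\,\ldots\,\Omega_2)^{\dagger}(\Omega_{K-1}\,\ldots\,\Omega_2)$, and inserting this produces~(\ref{acto}). Unrolling the recursion reproduces exactly the bookkeeping in the proof of Lemma~\ref{lemko2}: after the $j$-th round one has $\Theta_{K-1}=\Omega_1^{\dagger}\Omega_2^{\dagger}\ldots\Omega_j^{\dagger}\,Z_{K-1}\ldots Z_{j+1}\,\Omega_j\ldots\Omega_1$, and at the final stage the leftover factor $Z_{K-1}=\Omega_{K-1}^{\dagger}\Omega_{K-1}$ requires no further transfer.

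For the converse I would recover the $Z_j$ from the prescribed factors $\Omega_1,\ldots,\Omega_{K-1}$ of~(\ref{acto}), necessarily in the order of \emph{decreasing} $j$. Set $Z_{K-1}:=\Omega_{K-1}^{\dagger}\Omega_{K-1}$, which is self-adjoint and positive definite in ${\cal H}_{math}$ since $\Omega_{K-1}$ is invertible; this makes the conjugation $\Lambda\to\Lambda^{\ddagger(K-2)}$ well defined through~(\ref{closef}). Put $Z_{K-2}:=\Omega_{K-2}^{\ddagger(K-2)}\Omega_{K-2}$, which is then $\ddagger(K-2)$-self-adjoint and positive definite in ${\cal R}_{K-2}$ — the latter because the inner product~(\ref{reccu}) of ${\cal R}_{K-2}$ is positive as soon as $Z_{K-1}$ is — and continue downward, at each step using the $Z_{K-1},\ldots,Z_{j+1}$ already constructed to define $\ddagger(j)$ and then putting $Z_j:=\Omega_j^{\ddagger(j)}\Omega_j$. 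These operators satisfy~(\ref{btri}) and are positive definite by construction, and feeding them into the forward implication gives $Z_{K-1}\ldots Z_1=(\Omega_{K-1}\ldots\Omega_1)^{\dagger}(\Omega_{K-1}\ldots\Omega_1)=\Theta_{K-1}$, so the two data sets are mutually compatible. This closes the equivalence.

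The main obstacle — and the reason the assertion is stated as a \emph{formal} equivalence — is precisely this circularity in the converse: the meaning of the superscript $\ddagger(j)$ presupposes the metrics $Z_{j+1},\ldots,Z_{K-1}$, so the reconstruction cannot be carried out simultaneously but only recursively from $j=K-1$ downwards, with the positive-definiteness of each freshly defined $Z_j$ in its own space ${\cal R}_j$ verified before the next conjugation is invoked. In finite dimension this is automatic. In infinite dimension one would additionally have to control the domains and the boundedness-with-bounded-inverse of the operators $\Omega_j$ and of every partial product $\Omega_j\ldots\Omega_1$, i.e.\ exactly the analytic points flagged after Eq.~(\ref{closef}); as throughout the paper these are set aside and the argument is carried out purely at the algebraic level.
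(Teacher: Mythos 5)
Your proposal is correct and follows essentially the same route as the paper: the paper proves the $K=3$ and $K=4$ cases (Lemmas~\ref{lemko1} and \ref{lemko2}) by relegating the $\ddagger(j)$-conjugations to ${\cal H}_{math}$ through Eqs.~(\ref{24}) and (\ref{closef}) and explicitly leaves the general-$K$ recursion to the reader, which is precisely the induction you carry out by peeling off the innermost factor $Z_1=\Omega_1^{\ddagger(1)}\Omega_1$ and applying the hypothesis to the shifted chain $Z_{K-1}\ldots Z_2$. Your downward reconstruction of the $Z_j$ from the $\Omega_j$ in the converse direction is an addition the paper does not spell out, but it stays within the purely algebraic, ``formal equivalence'' reading that the theorem and the surrounding discussion intend.
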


It makes good sense to emphasize
that the latter result is purely algebraic,
i.e., guaranteed as
valid in the finite-dimensional Hilbert spaces.
For the infinite-dimensional Hilbert spaces
the statement had to be complemented
by several necessary technical assumptions
like, typically, those concerning the domains and ranges of
operators, etc.
In this direction of the further development of the theory,
the recommended preparatory reading
may be found, e.g., in the recent
dedicated monograph \cite{book}.

\section{Conclusions\label{sumys}}

In the manner proposed and verified by Dyson \cite{Dyson},
a decisive simplification of difficult
calculations of certain properties
of complicated quantum systems
can sometimes be achieved via introduction of an
additional, auxiliary inner product \cite{PLA}.
Our present paper offers, in
essence, just a nontrivial $K-$space
technical extension and amendment of the
Dyson-inspired CHSP quantum theory.

We felt motivated by the observation
that the fairly complicated CHSP-related mathematics
weakened the appeal and
visibility of the innovative methodical aspects of the approach
\cite{MZbook}.
For this reason,
the current return of attention to the physics with
non-Hermiticites \cite{Christodoulidesb} was obviously
motivated by the
simplification of some technicalities
provided by the most popular PTQM philosophy \cite{Carl,ali}.

Applications of the updated PTQM
theory profited, first of all, from
the factorization
of metric.
Simultaneously, these applications suffered
from the loss of a direct contact
with the principle of correspondence
as offered, initially, by Eq.~(\ref{prvak}).
In our recent brief note \cite{PLA} we
addressed the problem.
We pointed out
that the somewhat mysterious PTQM-mediated charge-based
relegation of the Hermiticity
from the physical Hilbert space ${\cal H}_{phys}$ to
auxiliary ${\cal H}_{math}$
may be given a very natural conceptual background
when one decides to treat,
on equal footing,
the observability of the Hamiltonian $H$
and the observability of the charge ${\cal C}$.

More recently we revealed that the strict restriction of
the resulting ISP formalism
to the $K=3$ kinematical scenario given by Eq.~(\ref{tripl})
was in fact based on an elementary misunderstanding.
We felt discouraged by the
apparently counterintuitive
observation that at $K=3$ the parity ${\cal P}=Z_2$
seems to have lost its physical observability status.
Only later we imagined that such a loss is in
fact mathematically very natural, especially when one
builds the models without some suitable {\it ad hoc\,}
additional constraints (see, e.g., an
extensive discussion of this point in \cite{arabky}).

With the latter mental barrier removed, we still felt
discouraged by another, more technical observation
that whenever one tries to move to any nontrivial $K>2$
(i.e., to the kinematics using the larger inner-product-space
multiplets (\ref{multpl})), the properly generalized Dyson
maps seem to ``cease to behave nicely''.
After a more detailed analysis of the latter problem
(leading, finally, to the results sampled here
by Theorem \ref{theodor}), fortunately,
also the latter doubts have been eliminated.

We can conclude
that
the present systematic $K>3$ GSP extension
of the scope of the most economical and popular
SP formulation of quantum mechanics
will find efficient implementations
in all of the contexts in which one
finds the reasons for the replacement of the
Dyson-mapping-based
metric (\ref{amenes})
by its generalizations, be it
the
metrics (\ref{amen2}) or (\ref{amen3})
or the entirely general metric-factorization formulae
(\ref{acto}) or (\ref{amenK})
in which the number of the auxiliary generalized-charge factors
can be an arbitrary positive integer $N_{}=K-1$.

\newpage

\subsection*{Acknowledgements}

The author is grateful to Excellence project P\v{r}F UHK
2211/2022-2023 for the financial support.

\subsection*{Data Availability Statement}

Data sharing is not applicable to this article as no new data were
created or analyzed in this study.

\subsection*{Conflicts of Interest}

The author declares no conflict of interest.

\newpage

\end{document}